\setlist{  
  listparindent=\parindent,
  parsep=0pt,
}
\crefname{equation}{}{}
\newlength{\nodeline}
\newlength{\arrowline}
\definecolor{color_gadget_PHP}{RGB}{219, 48, 122}
\colorlet{color_gadget_PHP_inner}{color_gadget_PHP!10!white}
\colorlet{color_gadget_PHP_label}{color_gadget_PHP!70!black}
\definecolor{color_gadget_SOPL}{RGB}{255, 153, 0}
\colorlet{color_gadget_SOPL_inner}{color_gadget_SOPL!10!white}
\colorlet{color_gadget_SOPL_label}{color_gadget_SOPL!70!black}
\definecolor{color_gadget_PATHPHP}{RGB}{0, 153, 255}
\colorlet{color_gadget_PATHPHP_inner}{color_gadget_PATHPHP!10!white}
\colorlet{color_gadget_PATHPHP_label}{color_gadget_PATHPHP!70!black}
\definecolor{color_gadget_ITER}{RGB}{153, 204, 0}
\colorlet{color_gadget_ITER_inner}{color_gadget_ITER!10!white}
\colorlet{color_gadget_ITER_label}{color_gadget_ITER!70!black}
\definecolor{color_gadget_EOPL}{RGB}{0, 153, 255}
\colorlet{color_gadget_EOPL_inner}{color_gadget_EOPL!10!white}
\colorlet{color_gadget_EOPL_label}{color_gadget_EOPL!70!black}
\tikzstyle{node} = [rectangle, line width=\nodeline, draw = black, fill = white, inner sep = 0mm, minimum size = 3.5mm]
\tikzstyle{node_small} = [node, circle, line width = 0.25mm, minimum size = 2.5mm]
\tikzstyle{solution} = [fill=red!90!,draw=black!50!red]
\tikzstyle{side} = [fill=Goldenrod,draw=Brown]
\tikzstyle{node_text} = [] 
\tikzstyle{node_regular} = [node]
\tikzstyle{node_regular_small} = [node_small]
\tikzstyle{node_solution} = [node, solution]
\tikzstyle{node_solution_small} = [node_small,solution]
\tikzstyle{node_a} = [node, side, rectangle, minimum size = 4.3mm]
\tikzstyle{node_a_solution} = [node_a, solution]
\tikzstyle{node_a_small} = [node_small, side, rectangle, minimum size = 2.15mm]
\tikzstyle{node_a_solution_small} = [node_a_small,solution]
\tikzstyle{node_b} = [node, side, diamond, minimum size = 6.2mm]
\tikzstyle{node_b_small} = [node_a_small, diamond, minimum size = 3.1mm]
\tikzstyle{node_notice} = [node, draw = Green!20!LimeGreen, line width=2.5\nodeline, fill=none, minimum size = 9mm, dotted]
\tikzstyle{node_notice_small} = [node_notice, line width=2*\nodeline, minimum size = 8mm]
\tikzstyle{naive} = [minimum size = 4.5mm,rectangle]
\tikzstyle{node_regular_intro} = [node,fill=Gray!10!white]
\tikzstyle{edge} = [-{Latex[round]}, line width=\arrowline]
\tikzstyle{edge_regular} = [edge]
\tikzstyle{edge_regular_small} = [-{Latex[round]}, line width = 0.25mm, shorten < = 3pt, shorten >=3pt]
\tikzstyle{edge_php} = [edge, color=color_gadget_PHP!70!black]
\tikzstyle{edge_php_small} = [edge_php, edge_regular_small]
\tikzstyle{edge_eopl} = [edge, color=color_gadget_EOPL!70!black]
\tikzstyle{edge_eopl_small} = [edge_eopl, edge_regular_small]
\tikzstyle{edge_iter} = [edge, color=color_gadget_ITER!70!black]
\tikzstyle{edge_iter_small} = [edge_iter, edge_regular_small]
\tikzstyle{edge_pathphp_small} = [line width = 0.25mm, -{Latex[round]}, decorate, decoration={snake, segment length=2.5mm, amplitude=1mm, pre length=7pt,post length=8pt}, shorten < = 3pt, shorten >=3pt, color=color_gadget_PATHPHP_label]
\tikzstyle{edge_orange} = [edge, >=latex, <->, color=color_gadget_SOPL!90!black]
\tikzstyle{gadget} = [rounded corners, line width = 0.4mm, dashed]
\tikzstyle{gadget_PHP} = [gadget, draw = color_gadget_PHP, fill=color_gadget_PHP_inner]
\tikzstyle{gadget_PHP_small} = [gadget_PHP, line width = 0.2mm]
\tikzstyle{gadget_SOPL} = [gadget, draw = color_gadget_SOPL, fill=color_gadget_SOPL_inner]
\tikzstyle{gadget_SOPL_small} = [gadget_SOPL, line width = 0.2mm]
\tikzstyle{gadget_PATHPHP} = [gadget, draw = color_gadget_PATHPHP, fill=color_gadget_PATHPHP_inner]
\tikzstyle{gadget_PATHPHP_small} = [gadget_PATHPHP, line width = 0.2mm]
\tikzstyle{gadget_ITER} = [gadget, draw = color_gadget_ITER, fill=color_gadget_ITER_inner]
\tikzstyle{gadget_ITER_small} = [gadget_ITER, line width = 0.2mm]
\tikzstyle{gadget_EOPL} = [gadget, draw = color_gadget_EOPL, fill=color_gadget_EOPL_inner]
\tikzstyle{gadget_EOPL_small} = [gadget_EOPL, line width = 0.2mm] 
\DeclareRobustCommand\bfseries{%
  \not@math@alphabet\bfseries\mathbf
  \fontseries\bfdefault\selectfont\boldmath}
\declaretheorem[name=Theorem]{theorem}
\declaretheorem[name=Lemma,sibling=theorem]{lemma}
\declaretheorem[name=Claim,sibling=theorem]{claim}
\declaretheorem[name=Definition, sibling=theorem, style=definition]{definition}
\newcommand{\Xor}{\textsc{\upshape\scshape Xor}}
\newcommand{\Col}{\text{\upshape\scshape Col}}
\newcommand{\BiCol}{\text{\upshape\scshape BiCol}}
\newcommand{\PHP}{\text{\upshape\scshape PHP}}
\newcommand{\BiPHP}{\text{\upshape\scshape BiPHP}}
\newcommand{\VER}{\text{\upshape\scshape Ver}}
\newcommand{\Unfold}{\text{\upshape\scshape Unfold}}
\newcommand{\SetUnfold}{\text{\upshape\scshape SetUnfold}}
\newcommand{\calP}{\mathcal{P}}
\newcommand{\Z}{\mathbb{Z}}
\begin{document}

\mbox{}\vspace{12mm}

\begin{center}
{\huge Communication Complexity of Collision}
\\[1cm] \large
	
\setlength\tabcolsep{2em}
\begin{tabular}{cc}
Mika G\"o\"os&
Siddhartha Jain\\[-1mm]
\small\slshape EPFL &
\small\slshape EPFL
\end{tabular}
	
\vspace{6mm}
	
\large
\today
	
\vspace{6mm}
	
\end{center}

\begin{abstract}
\noindent
The \emph{Collision problem} is to decide whether a given list of numbers $(x_1,\ldots,x_n)\in[n]^n$ is $1$-to-$1$ or $2$-to-$1$ when promised one of them is the case. We show an $n^{\Omega(1)}$ randomised communication lower bound for the natural two-party version of Collision where Alice holds the first half of the bits of each $x_i$ and Bob holds the second half. As an application, we also show a similar lower bound for a weak bit-pigeonhole search problem, which answers a question of Itsykson and Riazanov ({\footnotesize CCC 2021}).
\end{abstract}

\section{Introduction} \label{sec:results}

\paragraph{Collision problem.} The \emph{Collision problem} $\Col_N\colon[N]^N\to\{0,1,*\}$ is the following partial (promise) function. The input is a list of numbers~$z= (z_1,\ldots,z_N)\in [N]^N$ where $N$ is even. The goal is to distinguish between the following two cases, when promised that $z$ satisfies one of them.
\begin{itemize}
    \item $\Col_N(z)=0$ iff $z$ is $1$-to-$1$, that is, every number in the list $z$ appears in the list once.
    \item $\Col_N(z)=1$ iff $z$ is $2$-to-$1$, that is, every number in the list $z$ appears in the list twice.
\end{itemize}

The Collision problem has been studied exhaustively in quantum query complexity \cite{Brassard1998,Aaronson2002,Aaronson2004,GroverR04,Kutin2005,Ambainis2005,Aaronson2012,Scott2013,BunT16}. It was initially introduced to model the task of breaking collision resistant hash functions, a central problem in cryptanalysis. A robust variant of Collision is complete for {\sffamily NISZK}~\cite{Blum1991}, and consequently it has been featured in black-box oracle separations~\cite{Lovett2017,Bouland2019}. The problem has also been used in reductions to show hardness of other problems such as set-equality~\cite{Midrijanis2004} and various problems in property testing~\cite{Bravyi2011}. Upper bounds for Collision has been used to design quantum algorithms for triangle finding \cite{Magniez2007} and approximate counting \cite{AaronsonKKT20}.

In this paper, we consider a natural bipartite communication version of this problem, where we split the binary encoding of each number between two parties, Alice and Bob. Specifically, for~$N= 2^n$ where $n$ is even, we will define a bipartite function
\[
\BiCol_N\colon(\{0,1\}^{n/2})^N\times(\{0,1\}^{n/2})^N\to\{0,1,*\}.
\]
Here Alice gets as input a list of half-numbers $x=(x_1,\ldots,x_N)\in(\{0,1\}^{n/2})^N$, Bob gets a list of half-numbers $y=(y_1,\ldots,y_N)\in(\{0,1\}^{n/2})^N$, and we view their concatenation $z\coloneqq x\centerdot y$, defined by~$z_i\coloneqq x_iy_i$, as an input to $\Col_N$. Their goal is to compute $\BiCol_N(x,y)\coloneqq \Col_N(x\centerdot y)$.

\paragraph{Upper bounds.}
We first observe that $\BiCol_N$ admits a deterministic protocol that communicates at most $O(\sqrt{N}\log N)$ bits. Indeed, if  $x\centerdot y$ is 1--1, then since Alice's half-numbers are $n/2$ bits long, there are $\sqrt{N}$ distinct half-numbers, each appearing $\sqrt{N}$ many times in $x$. We may assume this is true also if $x\centerdot y$ is 2--1 (as otherwise it is easy to tell that we are in case 2--1). Consider the set of indices $I\coloneqq\{i\in[N]: x_i=0^{n/2}\}$, $|I|= \sqrt{N}$. Then $x\centerdot y$ restricted to indices~$I$ is~1--1 (resp.~2--1) if the original unrestricted input is 1--1 (resp.~2--1). Hence Alice can send the indices~$I$ to Bob, who can determine the value of the function.

If we are allowed randomness, we can do slightly better: there is a randomised protocol of cost~$O(N^{1/4}\log N)$. In this protocol, Alice samples a subset $I'\subseteq I$ of size $|I'|=\Theta(N^{1/4})$ uniformly at random and sends it to Bob, who checks for a collision in his part of the input. If the original input was 2--1, then by the birthday paradox, Bob will observe a collision with high probability.

\paragraph{Lower bound.}
As our main result, we prove a small polynomial lower bound for $\BiCol_N$, which shows that the above randomised protocol cannot be improved too dramatically.
\begin{theorem} \label{thm:main}
$\BiCol_N$ has randomised (and even quantum) communication complexity $\Omega(N^{1/12})$.
\end{theorem}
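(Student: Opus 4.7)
The plan is to deduce the bound by reduction from the quantum query complexity lower bound for the Collision problem: $Q^{\mathrm{dt}}(\Col_M) = \Omega(M^{1/3})$, due to Aaronson--Shi with improvements by Kutin and Ambainis. Since quantum communication is no more expensive than randomised, it suffices to prove the quantum lower bound. The aim is to show that a quantum protocol $\Pi$ for $\BiCol_N$ of cost $c$ can be converted into a quantum query algorithm for $\Col_M$ with $\mathrm{poly}(c)$ queries, for a well-chosen parameter $M$ polynomial in $N$. Optimising the parameters will yield $c = \Omega(N^{1/12})$.

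\textbf{Embedding.} Given an instance $w \in [M]^M$ of $\Col_M$, construct an instance $z \in [N]^N$ of $\BiCol_N$ with $N = MT$ by duplicating each entry $T=N/M$ times: for $(g,i) \in [M] \times [T]$, set
\[
z_{(g,i)} \;\coloneqq\; (w_g,\, i) \;\in\; [M] \times [T] \;=\; [N].
\]
This preserves the structure: $z$ is 1-to-1 (resp.\ 2-to-1) iff $w$ is 1-to-1 (resp.\ 2-to-1). To turn this into a bipartite $\BiCol_N$ input, encode $(w_g,i)$ as an $n$-bit string by interleaving the bits of $w_g$ and $i$ so that Alice's and Bob's halves each contain half of the bits of $w_g$ (and half of the bits of $i$). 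In particular, neither party alone identifies any individual $w_g$.

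\textbf{Query simulation.} Given $\Pi$, build a quantum query algorithm $A$ for $\Col_M$ that simulates $\Pi$ on the embedded instance. The simulator lazily queries $w_g$ whenever $\Pi$'s execution causes Alice or Bob to evaluate input bits in group $g$; the remaining (untouched) groups are handled by sampling from the posterior distribution, or---in the quantum case---by a compressed-oracle simulation à la Zhandry. The crucial property is that $\Pi$, transmitting only $c$ bits, can only ``meaningfully depend'' on $\mathrm{poly}(c)$ groups, so the total query cost of $A$ is $\mathrm{poly}(c)$. Plugging the Aaronson--Shi bound into $Q^{\mathrm{dt}}(\Col_M) = \Omega(M^{1/3})$ and choosing $M \approx N^{1/4}$ then yields $c = \Omega(N^{1/12})$.

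\textbf{Main obstacle.} The technical heart of the argument is the simulation step: bounding the query cost by $\mathrm{poly}(c)$. In the communication model, Alice and Bob nominally have full (not query-style) access to their inputs, and a naive simulation would therefore require reading all of $w$. One must argue that a low-cost protocol only ``concentrates'' its dependence on a small number of groups $g$ in an operational sense. Making this intuition rigorous---especially in the quantum setting, where the protocol may evaluate its inputs in superposition---is the delicate part, and the polynomial loss between the $\Omega(M^{1/3})$ query bound and the final $\Omega(N^{1/12})$ communication bound is precisely where this loss shows up (matched by the duplication factor $T \approx N^{3/4}$).
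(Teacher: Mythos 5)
There is a genuine gap, and it sits exactly where you flag it: the ``query simulation'' step. What you need is a statement of the form \emph{every $c$-bit (quantum) protocol for the bit-split problem can be simulated by a $\poly(c)$-query algorithm on the underlying list $w$}. That is precisely a query-to-communication lifting/simulation theorem for the trivial gadget in which the bits of each entry are simply divided between Alice and Bob, and no such theorem is known; for such weak gadgets (e.g.\ $\Xor$, or the half-split identity map used here) lifting is false in general, which is why the paper explicitly avoids this route. Worse, your own arithmetic requires the simulation overhead to be \emph{linear} in $c$ (otherwise $M\approx N^{1/4}$ and $\Omega(M^{1/3})$ do not give $\Omega(N^{1/12})$), and the linear-overhead claim is refuted by the paper itself in the undup­licated case $T=1$: $\BiCol_M$ has a randomised (hence also quantum) protocol of cost $O(M^{1/4}\log M)$, while $\Col_M$ requires $\Omega(M^{1/3})$ quantum (and $\Omega(\sqrt{M})$ randomised) queries, so ``$c$ bits of communication $\Rightarrow O(c)$ queries'' cannot hold. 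Your duplication factor $T\approx N^{3/4}$ is presumably meant to absorb this discrepancy, but the proposal gives no mechanism by which duplication enables the simulation --- the heuristic ``the protocol can only meaningfully depend on $\poly(c)$ groups'' does not use $T$ at all and, as the $T=1$ case shows, is not a valid principle; the appeal to compressed oracles does not help either, since $w$ is adversarial and the parties have unrestricted access to their halves.

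The paper goes in the opposite direction precisely to sidestep this. It first gets communication hardness for the \emph{composed} function $\Col_N\circ\VER$ by combining the Aaronson--Shi approximate-degree bound $\deg_{1/3}(\Col_N)=\Omega(N^{1/3})$ with Sherstov's lifting theorem, which is a proven simulation statement but only for the specific ``versatile'' gadget $\VER$, not for bit-splitting. The real work is then a \emph{rectangular reduction} $\Col_N\circ g\le\BiCol_{N^{2k}}$ for any regular gadget $g$: each gadget-encoded entry is expanded into the full ordered orbit of its preimage under the symmetry group of $g$, and regularity guarantees the expansion preserves the 1--1 versus 2--1 structure. The polynomial blow-up $N\mapsto N^4$ (for $g=\VER$, $k=2$) is what produces the exponent $1/12$. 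In short, the step you leave open is the entire content of the theorem; to make your plan work you would have to prove a new lifting theorem for the bit-split gadget, which is stronger than what is currently known and, in the parameter regime without duplication, provably false.
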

We conjecture that the $O(N^{1/4}\log N)$-bit protocol for $\BiCol_N$ is essentially optimal (up to logarithmic factors) for randomised protocols. It is an interesting open problem to close this gap.

\subsection{Application}

\paragraph{Bit-pigeonhole principle.}
We also show a lower bound for a search problem associated with the \emph{pigeonhole principle}. We define $\PHP^M_N$ where $M>N$ as the following search problem: On input $z=(z_1,\ldots,z_M)\in[N]^M$ the goal is to output a collision, that is, a pair of distinct indices $i,j\in[M]$ such that $z_i=z_j$. We note that $\PHP^M_N$ is a \emph{total} search problem (not a promise problem); it always has a solution since we require $M>N$. As before, we can turn $\PHP^M_N$ naturally into a bipartite communication search problem $\BiPHP^M_N$ where $N= 2^n$ so that
\begin{itemize}[noitemsep]
\item Alice holds $x=(x_1,\ldots,x_M)\in(\{0,1\}^{n/2})^M$;
\item Bob holds~$y=(y_1,\ldots,y_M)\in(\{0,1\}^{n/2})^M$; and
\item the goal is find a collision, that is, distinct $i,j\in[M]$ such that $x_iy_i=x_jy_j$.
\end{itemize}

\paragraph{Lower bounds.}
Itsykson and Riazanov~\cite{Itsykson2021} proved that $\BiPHP^{N+1}_N$ requires $\Omega(\sqrt{N})$ bits of randomised communication. Their proof was via a randomised reduction from set-disjoitness. A corollary of their result is that any proof system that can be efficiently simulated by randomised protocols (most notably, tree-like $\text{Res}(\oplus)$~\cite{Itsykson2020}) requires exponential size to refute bit-pigeonhole formulas featuring $N+1$ pigeons and $N$ holes. They asked whether a similar communication lower bound could be proved for the \emph{weak} pigeonhole principle with $M=2N$ pigeons and $N$ holes. We answer their question in the affirmative in the following theorem.
\begin{theorem} \label{thm:search}
$\BiPHP^{2N}_N$ has randomised (and even quantum) communication complexity $\Omega(N^{1/12})$.
\end{theorem}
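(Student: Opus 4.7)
The plan is to reduce $\BiCol_N$ to $\BiPHP^{2N}_N$ with only $O(1)$ multiplicative and $O(\log N)$ additive overhead in communication; composing with \cref{thm:main} will then yield the claimed $\Omega(N^{1/12})$ bound, since $\log N = o(N^{1/12})$. Given a $\BiCol_N$ input $(x,y)$ with combined list $z = x\centerdot y \in [N]^N$, Alice and Bob use shared randomness to sample a uniformly random permutation $\pi$ of $[N]$ and form (without any communication) the $\BiPHP^{2N}_N$ instance whose combined list is
\[
z' \;:=\; (z_1,\ldots,z_N,\;z_{\pi(1)},\ldots,z_{\pi(N)}).
\]
They run the assumed $\BiPHP^{2N}_N$ protocol on this instance, receive a candidate collision $(a,b)$, and spend $O(\log N)$ further bits to verify $z'_a = z'_b$. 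They output ``$1$-to-$1$'' if the candidate is either invalid or a cross-half pair with $a \le N < b$ and $\pi(b-N)=a$ (a \emph{trivial} collision, the only kind that can arise in the $1$-to-$1$ case); otherwise they output ``$2$-to-$1$''.

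Correctness in the $1$-to-$1$ case is immediate: when $z$ is a bijection, the only collisions in $z'$ are the $N$ trivial cross-half pairs $(i, N+\pi^{-1}(i))$, each correctly classified. The heart of the argument is the $2$-to-$1$ case. Let $m\colon[N]\to[N]$ be the fixed-point-free involution pairing indices with equal $z$-values, and set $\mathrm{Sym}(z):=\{\sigma : z\circ\sigma = z\}$, a group of size $2^{N/2}$. Since $z\circ\pi = z\circ(\sigma\pi)$ for every $\sigma\in\mathrm{Sym}(z)$, the instance $z'$ depends on $\pi$ only through the left coset $\mathrm{Sym}(z)\cdot\pi$; hence, conditional on $z'$, the posterior distribution of $\pi$ is uniform over this coset. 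In particular, whenever the protocol outputs a cross-half valid collision $(a, N+j)$, the two possibilities $\pi(j) \in \{a, m(a)\}$ are exchangeable under the posterior — they correspond to elements of $\mathrm{Sym}(z)\cdot\pi$ differing by the fiber-swap transposition $(a,m(a))$ — so the ``trivial'' event $\pi(j)=a$ has posterior probability exactly $1/2$. Combined with the fact that any same-half valid collision is automatically classified as ``$2$-to-$1$'' and with the protocol's constant success probability, each trial outputs ``$2$-to-$1$'' with probability at least $1/3$; a constant number of independent trials then boosts this to any desired constant.

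The main subtlety is the exchangeability claim itself: no matter how cleverly the protocol chooses its output from the visible input, the hidden coordinate $\pi(j)$ must remain uniform over $\{a, m(a)\}$ in the $2$-to-$1$ case, because swapping the two is realized by a symmetry of the input that $z'$ cannot reveal. Once this is nailed down, plugging the constant-trial reduction into \cref{thm:main} (and using $\log N = o(N^{1/12})$) yields the claimed lower bound for $\BiPHP^{2N}_N$.
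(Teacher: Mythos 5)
Your reduction is correct, but it is a genuinely different construction from the paper's. The paper appends one copy of \emph{every} hole value (the lexicographic list of all $N$ numbers), then shuffles all $2N$ coordinates by a random permutation and declares ``2--1'' when the collision returned by the $\BiPHP^{2N}_N$ protocol avoids the planted coordinates; the key indistinguishability step is that, conditioned on the shuffled input, each of the three positions of a thrice-occurring value is equally likely to be the planted one, giving success probability $1/3$ in the 2--1 case. You instead append a $\pi$-permuted copy of the input itself, leave the two halves in place (so the ``planted'' positions are public), and declare ``2--1'' unless the returned collision is the trivial matched pair under $\pi$; your key step is that, conditioned on $(x',y')$, the posterior of $\pi$ is uniform over the coset $\mathrm{Sym}(z)\cdot\pi$, and left-multiplication by the fiber-swap $(a\,m(a))\in\mathrm{Sym}(z)$ makes $\pi(j)=a$ and $\pi(j)=m(a)$ exactly equally likely, giving conditional probability $1/2$ (hence $\ge 1/3$ overall after accounting for the solver's error). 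Both are one-sided decision-to-search reductions amplified by $O(1)$ repetitions and composed with \cref{thm:main}, and both rely on the same implicit (and correct) point that the solver's output is conditionally independent of the hidden randomness given its visible input $(x',y')$ — in your case you should make sure the solver's own randomness is taken independent of $\pi$, and the argument then extends verbatim to quantum protocols, as needed for the stated theorem. What each buys: your version avoids shuffling the original coordinates and yields a clean, exact $1/2$ exchangeability argument via the stabilizer-coset symmetry; the paper's planting of all hole values guarantees that in the 1--1 case the constructed instance is exactly 2--1 and that every collision there touches a planted coordinate, at the cost of the slightly more delicate three-way symmetry. The quantitative conclusion, $\Omega(N^{1/12})$, is the same either way (your extra $O(\log N)$ verification bits are negligible, and in fact $O(1)$ bits suffice since the output pair is determined by the transcript).
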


Previously, Hrube{\v{s}} and Pudl{\'{a}}k~\cite{Hrubes2017} showed a small polynomial lower bound for $\BiPHP^M_N$ for every $M>N$ against deterministic (and even dag-like) protocols. By contrast, \cref{thm:search} is the first randomised lower bound in the $M=2N$ regime.

\subsection{Techniques}

Our proof of \cref{thm:main} proceeds as follows. A popular method to prove communication lower bounds is to start with a partial boolean function $f\colon\{0,1\}^n\to\{0,1,*\}$ that is hard to compute for decision trees and then apply a \emph{lifting theorem} (we use one due to Sherstov~\cite{Sherstov2011}) to conclude that the function $f\circ g$ obtained by composing $f$ with a small gadget $g\colon \Sigma\times\Sigma\to\{0,1\}$ is hard for communication protocols. Here $f\circ g\colon \Sigma^n\times\Sigma^n\to\{0,1,*\}$ is the communication problem where Alice holds $x\in\Sigma^n$, Bob holds $y\in\Sigma^n$, and their goal is to output
\[
(f\circ g)(x,y) ~\coloneqq~ f(g(x_1,y_1),\ldots,g(x_n,y_n)).
\]
A straightforward application of lifting often produces communication problems that are ``artificial'' since they are of the composed form. In particular, at first blush, it seems that the $\BiCol_N$ problem cannot be written in the form $f\circ g$ for any $f$ and any $g$ for which a lifting theorem holds. To address this issue, our main technical innovation is to show how the composed function $\Col_N\circ g$, where $g$ is a sufficiently ``regular'' gadget, can indeed be \emph{reduced} to the natural problem~$\BiCol_{N'}$. In this reduction, the input length will blow up polynomially, $N'= N^{\Theta(1)}$, which is the main reason why we only get a small polynomial lower bound. Our new reduction generalises a previous reduction from~\cite[\S6]{Itsykson2021}, which was tailored for the 2-bit $\Xor$ gadget.

To prove \cref{thm:search} we give a randomised \emph{decision-to-search} reduction from $\BiCol_N$ to~$\BiPHP^{2N}_N$. That is, we show that if there is an efficient randomised protocol for solving the \emph{total} search problem~$\BiPHP^{2N}_N$, then there is an efficient randomised protocol for solving the \emph{promise} problem~$\BiCol_N$. Given this reduction, \cref{thm:search} then follows from \cref{thm:main}. Similar style of randomised reductions have been considered in prior works~\cite{Raz1992,Huynh2012,Goos2018,Itsykson2021}, although they have always reduced from set-disjointness.

\section{Reductions and regular functions}

We assume some familiarity with communication complexity; see, e.g., the textbooks~\cite{Kushilevitz1997,Rao2020}.
In particular, it is often useful to view a bipartite function $f\colon \{0, 1\}^n \times \{0,1\}^n\to\{0,1\}$ as a~$2^n$-by-$2^n$ boolean matrix. We now give several definitions for the purposes of the proof of our main result.
\begin{definition}[Rectangular reduction]
For bipartite functions $f, g$ with domains $\{0, 1\}^n \times \{0,1\}^n$ and $\{0, 1\}^m \times \{0,1\}^m$, we write $f \leq g$ if there is a \emph{rectangular reduction} from $f$ to $g$, that is, there exist $a\colon \{0, 1\}^n \to \{0, 1\}^m$ and $b\colon \{0, 1\}^n \to \{0, 1\}^m$ such that $f(x, y) = g(a(x), b(y))$ for all $x,y$.
\end{definition}

Next, using basic language from group theory, we define a new class of highly symmetric boolean functions that we call \emph{regular}. (We borrow the term \emph{regular} from group theory where group actions satisfying the property in \cref{def:reg} below are called \emph{regular}.)

Let $\Pi_n$ denote the symmetric group on $[n]$, that is, the set of all permutations $[n]\to[n]$. Let~$S\subseteq \Pi_n\times \Pi_n$ be any group. We let $S$ act on the set $[n]\times[n]$ by permuting the rows and columns, that is, an element $s=(s^A,s^B)\in S$ acts on $(x,y)\in[n]\times[n]$ by $s\cdot (x,y)\coloneqq(s^A(x),s^B(y))$. For~$(x,y)\in[n]\times [n]$, we define its \emph{orbit} by $S\cdot(x,y)\coloneqq\{s\cdot (x,y):s\in S\}$.

\begin{definition}[Regular function] \label{def:reg}
A bipartite function $f\colon \{0, 1\}^{k} \times \{0, 1\}^{k} \to \{0, 1\}$ is \emph{regular} if there is a group $S\subseteq \Pi_{2^k}\times\Pi_{2^k}$ acting on the domain of $f$ such that the orbit of any $(x,y)\in f^{-1}(b)$, where $b\in\{0,1\}$, equals $f^{-1}(b)$, and, moreover, for every pair of inputs $(x_1, y_1), (x_2, y_2) \in f^{-1}(b)$ there is a unique~$s \in S$ such that $s\cdot(x_1,y_1)=s\cdot(x_2,y_2)$.
\end{definition}

It follows from the definition that $|S|=|f^{-1}(b)|=2^{2k-1}$ for both $b\in\{0,1\}$. A simple example of a regular function is the $2$-bit $\Xor$ function together with the 2-element group consisting of the identity map and the map $(x,y)\mapsto(\neg x,\neg y)$. However, the $\Xor$ function does not satisfy a fully general lifting theorem. This is why we consider the following more complicated gadget, called a \emph{versatile} gadget, which has been shown to satisfy various lifting theorems~\cite{Sherstov2011,Goos2018,Anshu2021}.
\begin{definition}
$\VER\colon \mathbb{Z}_4 \times \mathbb{Z}_4 \to \{0, 1\}$ is defined by $\VER(x, y) \coloneqq 1$ iff $x + y \pmod{4} \in \{2, 3\}$. 
\end{definition}

\begin{lemma}
$\VER$ is regular.
\end{lemma}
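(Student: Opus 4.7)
The plan is to exhibit an explicit subgroup $S \subseteq \Pi_4 \times \Pi_4$ of order $8$ built from affine permutations of $\mathbb{Z}_4$. Writing $\sigma_a(x) \coloneqq x + a$ and $\tau_a(x) \coloneqq -x + a$ (all arithmetic mod $4$), I would set
\[
S \coloneqq \{(\sigma_a,\sigma_{-a}) : a \in \mathbb{Z}_4\} \;\cup\; \{(\tau_a,\tau_{1-a}) : a \in \mathbb{Z}_4\},
\]
so $|S|=8$. The two defining constraints $a+b=0$ (``Type~I'') and $a+b=1$ (``Type~II'') are precisely what is needed for each $s \in S$ to preserve $\VER$: for Type~I elements, $s^A(x)+s^B(y)\equiv x+y\pmod 4$ outright; for Type~II elements, $s^A(x)+s^B(y) \equiv 1-(x+y)\pmod 4$, and the involution $u \mapsto 1-u$ on $\mathbb{Z}_4$ fixes both $\{0,1\}$ and $\{2,3\}$ setwise, so the preimages are preserved.

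To see that $S$ is actually a group, I would use the identities $\sigma_a\circ\sigma_b=\sigma_{a+b}$, $\sigma_a\circ\tau_b=\tau_{a+b}$, $\tau_a\circ\sigma_b=\tau_{a-b}$, $\tau_a\circ\tau_b=\sigma_{a-b}$, from which it follows that Type~I composed with Type~I (and Type~II with Type~II) yields Type~I, whereas mixed compositions yield Type~II. Hence $S$ is closed under componentwise composition and is therefore a subgroup of $\Pi_4\times\Pi_4$.

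For the regularity axiom, observe that $|\VER^{-1}(b)|=8=|S|$ for both $b\in\{0,1\}$. Consequently, any transitive action of $S$ on $\VER^{-1}(b)$ is automatically free, which in turn yields the required uniqueness of the element $s$ carrying one point to another. It therefore suffices to exhibit a single orbit of size $8$ in each preimage. Starting from $(0,2)\in \VER^{-1}(1)$, the Type~I orbit traces out $\{(a,2-a):a\in\mathbb{Z}_4\}$ (all pairs with $x+y\equiv 2$) and the Type~II orbit traces out $\{(a,3-a):a\in\mathbb{Z}_4\}$ (all pairs with $x+y\equiv 3$), together covering $\VER^{-1}(1)$. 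An analogous computation from $(0,0)$ covers $\VER^{-1}(0)$. The only step that requires any thought is spotting the asymmetric shift by $1$ in the Type~II family that glues translations and reflections into a common symmetry group of $\VER$; once the group is written down, the rest is mechanical bookkeeping.
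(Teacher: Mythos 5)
Your proof is correct and takes essentially the same approach as the paper: your eight affine symmetries $(\sigma_a,\sigma_{-a})$ and $(\tau_a,\tau_{1-a})$ are exactly the group the paper generates from $(x,y)\mapsto(x+1,y-1)$ and $(x,y)\mapsto(1-x,-y)$. Both arguments then check that each $\VER^{-1}(b)$ is a single orbit and deduce the uniqueness property from the count $|S|=8=|\VER^{-1}(b)|$.
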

\begin{proof}
Consider the group $S\subseteq \Pi_4\times \Pi_4$ generated by the elements $(x,y)\mapsto(x+1,y-1)$ and~$(x,y)\mapsto (1-x,-y)$ where we use modulo 4 arithmetic. By explicit computations, we see that (here we list each element as a function of $(x,y)$)
\[
S ~=~
\Bigg\{
\begin{array}{llll}
 (x,y), & (x+1,y-1), &  (x+2,y-2), & (x+3,y-3),  \\
(1-x,-y), & (2-x,3-y), & (3-x,2-y), &(-x,1-y)
\end{array}
\Bigg\}.
\]
It is straightforward to check that $S$ gives rise to orbits $\VER^{-1}(0)$ and $\VER^{-1}(1)$; see \cref{fig:ver_gens}. Moreover, since $|S|=8=|\VER^{-1}(b)|$ for $b\in\{0,1\}$, the uniqueness property holds, too.
\end{proof}

Previously, \cite{Goos2018} showed that $\VER$ is \emph{random self-reducible}, that is, it admits a randomised reduction that maps any fixed input $(x,y)\in\VER^{-1}(b)$ into a uniform random input in $\VER^{-1}(b)$. It is easy to see that if a function is regular, then it is also random self-reducible (the random self-reduction is to apply a random symmetry from $S$). The converse, however, is unclear to us: If~$f$ is random self-reducible and balanced (meaning $|f^{-1}(0)|=|f^{-1}(1)|$), is it necessarily regular?

\begin{figure}
\centering
\begin{subfigure}[b]{0.49\textwidth}
\centering
\include{figs/ver_matrix}
{\bf (a)}
\end{subfigure}
\begin{subfigure}[b]{0.49\textwidth}
\centering
\begin{tikzpicture}[scale=1.2]

\tikzstyle{node_regular} = [node_regular_intro, inner sep=4pt,draw=none,rounded corners=3pt]

\begin{scope}[xscale=2]
\node[node_regular] (P11) at (0,0) {$(1, 1)$};
\node[node_regular] (P21) at (0,1) {$(2, 0)$};
\node[node_regular] (P31) at (0,2) {$(3, 3)$};
\node[node_regular] (P41) at (0,3) {$(0, 2)$};
\node[node_regular] (P12) at (1,0) {$(0, 3)$};
\node[node_regular] (P22) at (1,1) {$(1, 2)$};
\node[node_regular] (P32) at (1,2) {$(2, 1)$};
\node[node_regular] (P42) at (1,3) {$(3, 0)$};
\end{scope}

\draw[edge_regular] (P11) -- (P21);
\draw[edge_regular] (P21) -- (P31);
\draw[edge_regular] (P31) -- (P41);
\draw[edge_regular] (P41) to [out=180, in=180] (P11);

\draw[edge_regular] (P12) -- (P22);
\draw[edge_regular] (P22) -- (P32);
\draw[edge_regular] (P32) -- (P42);
\draw[edge_regular] (P42) to [out=0, in=0] (P12);

\draw[edge_orange] (P11) -- (P12);
\draw[edge_orange] (P21) -- (P42);
\draw[edge_orange] (P31) -- (P32);
\draw[edge_orange] (P41) -- (P22);
\end{tikzpicture}
{\bf (b)}
\end{subfigure}
\caption{{\bf (a)} The bipartite function $\VER\colon\Z_4\times\Z_4\to\{0,1\}$. {\bf (b)} The group relative to which $\VER$ is regular is generated by two elements whose actions on $\VER^{-1}(1)$ are illustrated here. The first generator is $(x, y) \mapsto (x+1, y-1)$ (black arrows) and the second is $(x, y) \mapsto (1-x, -y)$ (orange arrows).}
\label{fig:ver_gens}
\end{figure}
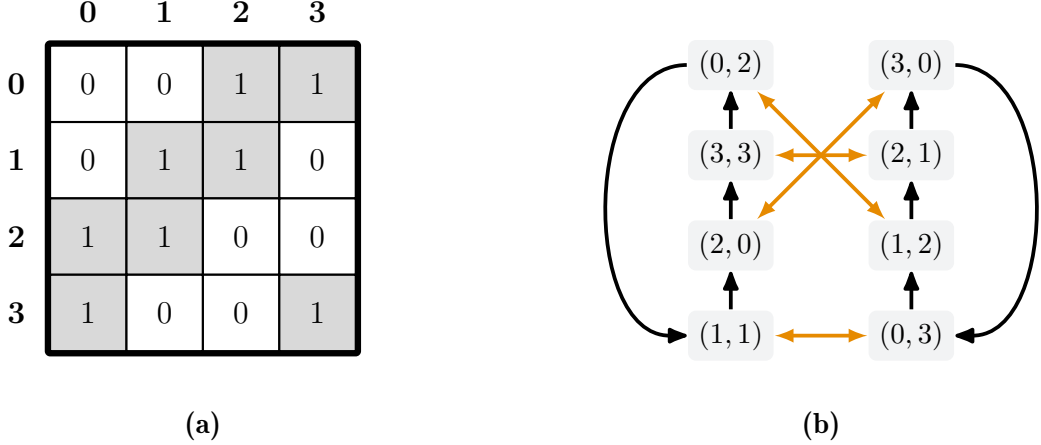

\section{Lower bound for bipartite collision}

In this section we prove \cref{thm:main}. We start with a standard application of a lifting theorem to establish a lower bound for the (somewhat artificial) composed function $\Col_N\circ\VER$. Here we think of $\Col_N$ as a boolean function $(\{0,1\}^n)^N\to\{0,1\}$ where $N=2^n$.
\begin{lemma} \label{lem:lift}
$\Col_N\circ \VER$ has randomised (and even quantum) communication complexity $\Omega(N^{1/3})$.
\end{lemma}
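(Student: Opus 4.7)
The plan is to apply a standard lifting theorem to a known query-complexity lower bound for Collision. Two ingredients are required.

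The first ingredient is the classical approximate-degree lower bound for Collision: any real polynomial that pointwise $1/3$-approximates $\Col_N$ on its promise domain has degree $\Omega(N^{1/3})$. This is a direct by-product of the polynomial-method lower bounds of Aaronson--Shi and Kutin (cited in the introduction) on the quantum query complexity of Collision, since those proofs work by showing that any bounded-error quantum algorithm induces a low-degree approximating polynomial.

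The second ingredient is the $\VER$-gadget version of Sherstov's pattern matrix method: for any partial boolean function $f$, the quantum (and hence randomised) communication complexity of $f\circ\VER$ is $\Omega(\widetilde{\deg}(f))$, where $\widetilde{\deg}(f)$ denotes the $1/3$-approximate degree of $f$. That $\VER$ is a valid gadget for such a lifting is established in \cite{Sherstov2011,Goos2018,Anshu2021} via its regularity and the resulting spectral/discrepancy properties, and is indeed one of the primary reasons $\VER$ is introduced in this paper in favor of simpler gadgets like $\Xor$.

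Combining these two ingredients with $f = \Col_N$ immediately yields the claimed $\Omega(N^{1/3})$ lower bound on the randomised (and quantum) communication complexity of $\Col_N\circ\VER$. The only point requiring care is that the lifting theorem must apply both to \emph{partial} (promise) functions and to the specific gadget $\VER$; both requirements are explicitly handled in the cited references, so no additional technical work is needed for this lemma, and the content really lies in setting up the reduction from $\Col_N\circ\VER$ to $\BiCol_{N'}$ in the subsequent sections.
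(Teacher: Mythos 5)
Your proposal is correct and follows essentially the same route as the paper: invoke the Aaronson--Shi $\Omega(N^{1/3})$ approximate-degree lower bound for $\Col_N$ and then apply Sherstov's lifting theorem for the $\VER$ gadget, which holds for partial functions. The only cosmetic difference is that the degree bound is the primary result of Aaronson--Shi (with the quantum query bound as its corollary) rather than a by-product of it, and one should note the approximating polynomial must also stay bounded outside the promise, as in the paper's definition of $\deg_{1/3}$.
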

\begin{proof}
Aaronson and Shi~\cite{Aaronson2004} (building on~\cite{Aaronson2002}) showed that $\deg_{1/3}(\Col_N)\geq \Omega(N^{1/3})$ where~$\deg_{1/3}(f)$ for a partial boolean function $f$ is the least degree of a multivariate polynomial~$p(x)$ such that~$p(x)=f(x)\pm 1/3$ for all $x$ such that~$f(x)\in\{0,1\}$ and $|p(x)|\leq 4/3$ for all $x$ with~$f(x)=*$. Sherstov~\cite[\S12]{Sherstov2011} proved that for any partial boolean function $f$, we have that the randomised (and even quantum) communication complexity of $f\circ \VER$ is at least $\Omega(\deg_{1/3}(f))$. Combining these two results proves the lemma.
\end{proof}

The challenging part of the proof is to find a reduction from $\Col_N\circ g$ to $\BiCol_{N'}$ where $g$ is a regular gadget and $N'$ is polynomially larger than $N$. Choosing $g\coloneqq \VER$ in the following theorem and combining it with \cref{lem:lift} completes the proof of \cref{thm:main}. Note that the input length becomes~$N'\coloneqq N^4$ so that we obtain the lower bound $\Omega(N^{1/3})=\Omega(N'^{1/12})$, as claimed.
\begin{theorem}
\label{thm:reduction2dec}
Let $g\colon \{0, 1\}^{k} \times \{0, 1\}^{k} \to \{0, 1\}$ be a regular gadget. For every $N = 2^n$ we have
\[
\Col_N \circ g ~\leq~ \BiCol_{N^{2k}}.
\]
\end{theorem}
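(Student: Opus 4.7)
The plan is to build an explicit rectangular reduction indexing each $\BiCol_{N^{2k}}$-position by a pair $(i,\vec{s})$ with $i\in[N]$ and $\vec{s}=(s_1,\ldots,s_n)\in S^n$, where $S\subseteq\Pi_{2^k}\times\Pi_{2^k}$ is the group witnessing regularity of $g$. Since $|S|=2^{2k-1}$, the number of positions is $N\cdot|S|^n=2^n\cdot 2^{(2k-1)n}=N^{2k}$, and each half-number has bit-length $kn=n'/2$ where $N'=2^{n'}=N^{2k}$, so the arithmetic lines up.

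Concretely, writing $x_i=(x_{i,1},\ldots,x_{i,n})\in(\{0,1\}^k)^n$ for Alice's $i$-th composed block (and $y_i$ for Bob), the reduction sets
\[
a_{i,\vec{s}}~\coloneqq~(s_1^A(x_{i,1}),\ldots,s_n^A(x_{i,n})),\qquad b_{i,\vec{s}}~\coloneqq~(s_1^B(y_{i,1}),\ldots,s_n^B(y_{i,n})),
\]
which depend only on $x$ and on $y$ respectively, so the reduction is manifestly rectangular. Writing $z_{i,\ell}\coloneqq g(x_{i,\ell},y_{i,\ell})$ and $z_i\coloneqq(z_{i,1},\ldots,z_{i,n})$, the combined full-number $a_{i,\vec{s}}\centerdot b_{i,\vec{s}}$ lies in the product set $\prod_\ell g^{-1}(z_{i,\ell})$, since every element of $S$ preserves the $g$-value.

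Correctness then reduces to three facts, each extracted from \cref{def:reg} applied coordinate-wise: (a) for fixed $i$, distinct $\vec{s}$'s yield distinct full-numbers, since the uniqueness clause makes $s_\ell\mapsto s_\ell\cdot(x_{i,\ell},y_{i,\ell})$ a bijection $S\to g^{-1}(z_{i,\ell})$ and hence the tuple map injective; (b) for $i\neq j$ with $z_i\neq z_j$, the full-numbers at $(i,*)$ and $(j,*)$ are disjoint, since they inhabit disjoint product sets; and (c) for $i\neq j$ with $z_i=z_j$, uniqueness yields $r_\ell\in S$ with $r_\ell\cdot(x_{i,\ell},y_{i,\ell})=(x_{j,\ell},y_{j,\ell})$, and then $a_{i,\vec{s}}\centerdot b_{i,\vec{s}}=a_{j,\vec{t}}\centerdot b_{j,\vec{t}}$ forces $s_\ell=t_\ell r_\ell$ for every $\ell$, yielding a bijection between the $\vec{t}$'s at $j$ and the $\vec{s}$'s at $i$ producing the same full-number. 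In the $1$-to-$1$ case of $\Col_N(z)$, (a)--(b) imply all $N^{2k}$ full-numbers are distinct; in the $2$-to-$1$ case, (a)--(c) imply each appears exactly twice, so $\BiCol_{N^{2k}}$ is well-defined on the output and agrees with $(\Col_N\circ g)(x,y)$.

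The delicate step is (c): the matched-pair count must come out to exactly two in order to preserve the $\BiCol$ promise, and this is where both the transitivity (to produce the $r_\ell$'s) and the uniqueness (to rule out further coincidences) clauses of regularity enter. Everything else is bookkeeping together with the index arithmetic from the first paragraph.
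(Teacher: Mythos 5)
Your proposal is correct and follows essentially the same route as the paper: it is the paper's $\Unfold$ construction with positions indexed by $(i,\vec{s})\in[N]\times S^n$, with your facts (a)--(c) matching the paper's claim that each block enumerates the full preimage set $\prod_\ell g^{-1}(z_{i,\ell})$ without repetition, that blocks with different $z$-values are disjoint, and that blocks with equal $z$-values coincide setwise (your explicit bijection $s_\ell=t_\ell r_\ell$ is just a spelled-out version of that last point). No gaps; the index arithmetic and the rectangularity observation also match the paper.
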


\begin{proof}
Consider the bipartite function $\Col_N\circ g$. Alice's input here is an $N$-tuple $(a^{(1)},\ldots,a^{(N)})$ where $a^{(j)}\in(\{0,1\}^k)^n$ for each $j\in[N]$. Bob's input $(b^{(1)},\ldots,b^{(N)})$ has a similar form. These bipartite inputs encode, via the gadgets, the input $(z^{(1)},\ldots,z^{(N)})$ to $\Col_N$ such that
\[
z^{(j)} ~\coloneqq~ g^n(a^{(j)},b^{(j)})
~\coloneqq~ (g(a^{(j)}_1,b^{(j)}_1),\ldots,g(a^{(j)}_n,b^{(j)}_n))
~\in~ \{0,1\}^n
\quad \text{where}\quad
a^{(j)}_i,b^{(j)}_i \in \{0,1\}^k.
\]
Let $S\subseteq\Pi_{2^k}\times\Pi_{2^k}$ be the symmetry group relative to which $g$ is regular. Recall that $|S|=2^{2k-1}$ and each $s\in S$ has the form $s=(s^A,s^B)$ with $s^A,s^B\in\Pi_{2^k}$. We fix an arbitrary ordering of the elements of $S$ and write $S(i)$ for the $i$-th element in this ordering. Thus $S=\{S(1),\ldots,S(2^{2k-1})\}$.

We first describe how the reduction expands each individual input $(a,b)\coloneqq (a^{(j)},b^{(j)})$ to $g^n$ into an ordered list of inputs to $g^n$. In more detail, the reduction
\begin{itemize}
\item takes an input $(a, b)=(a_1,\ldots,a_n,b_1,\ldots,b_n)\in(\{0,1\}^k)^{2n}$ to $g^n$, and
\item returns $\Unfold(a, b)\in(\{0,1\}^{2kn})^{N^{2k-1}}$, an \emph{ordered list} of $N^{2k-1}$ many inputs to $g^n$.
\end{itemize}
For any $n$-tuple of indices $I = (i_1, \ldots i_n)\in [|S|]^n$, we define the $I$-th pair in $\Unfold(a,b)$ by
\[
\Unfold(a, b)_I ~\coloneqq~ (\underbrace{s_{1}^{A}(a_1) s_{2}^{A}(a_2) \ldots s_{n}^{A}(a_{n})}_{\text{Alice's half}},~\underbrace{s_{1}^{B}(b_1) s_{2}^{B}(b_2) \ldots s_{n}^{B}(b_{n})}_{\text{Bob's half}})
\quad \text{where}\quad s_j \coloneqq S(i_j).
\]
Besides each pair in the list $\Unfold(a, b)$ being an input to $g^n$, we will also soon interpret them as pairs of half-numbers that are part of the input to $\BiCol_{N^{2k}}$.

Below, we write $\SetUnfold(a,b)\subseteq \{0,1\}^{2kn}$ for the \emph{set} of elements in the list $\Unfold(a,b)$, that is, ignoring the ordering and multiplicity of elements. 

\begin{claim} 
We have the following properties.
\begin{enumerate}[label=(\roman*)]
\item \label{lem:0}
$\SetUnfold(a, b)= (g^n)^{-1}(z)=g^{-1}(z_1)\times\cdots\times g^{-1}(z_n)$ where $z_i\coloneqq g(a_i,b_i)$.
\item \label{lem:1}
All pairs in $\Unfold(a, b)$ are distinct.
\item \label{lem:2}
Suppose $g^n(a, b) \neq g^n(a', b')$. Then $\SetUnfold(a, b) \cap \SetUnfold(a', b') = \emptyset$.
\item \label{lem:3}
Suppose $g^n(a, b) = g^n(a', b')$. Then $\SetUnfold(a, b)=\SetUnfold(a', b')$.
\end{enumerate}
\end{claim}
\begin{proof}
\Cref{lem:0}: Up to reordering of bits, the set equals $(S\cdot (a_1,b_1))\times (S\cdot (a_2,b_2))\times \cdots\times (S\cdot(a_n,b_n))$. By regularity, the orbit $S\cdot (a_i, b_i)$ is equal to $g^{-1}(z_i)$ for any $i$.
\Cref{lem:1}: The uniqueness property of the regular group action ensures that we do not get any repeated elements.
\Cref{lem:2}: If~$z\coloneqq g^n(a, b) \neq g^n(a', b')\eqqcolon z'$ then there is some $i$ such that $z_i\neq z'_i$. The $i$-th component of every pair in $\Unfold(a, b)$ lies in $g^{-1}(z_i)$ while the $i$-th component of every pair in $\Unfold(a, b)$ lies in $g^{-1}(z_i')$. The claim follows since these preimage sets are disjoint.
\Cref{lem:3}: If $g^{n}(a, b) = g^{n}(a', b')$, then \ref{lem:0} shows $\Unfold$ produces the same set for both $(a, b)$ and $(a', b')$.
\end{proof}

Our final reduction from $\Col_N\circ g$ maps Alice's $(a^{(1)},\ldots,a^{(N)})$ and Bob's $(b^{(1)},\ldots,b^{(N)})$ (which together encode the input $z=(z^{(1)},\ldots,z^{(N)})$ to $\Col_N$) to an input to $\BiCol_{N^{2k}}$ given by
\[
\Unfold(a^{(1)},b^{(1)}),\ldots,\Unfold(a^{(N)},b^{(N)}).
\]
Note that the reduction is rectangular: Alice can compute her part of the input, and Bob his.

It remains to check that the reduction treats 1--1 and 2--1 inputs correctly. If the input $z$ to $\Col_N$ is 1--1, then the reduction produces a 1--1 input by \ref{lem:1} and \ref{lem:2}. If the input $z$ to $\Col_N$ is 2--1 then for every index $i$ there is exactly one more index $j$ such that $z^{(i)}\coloneqq g^n(a^{(i)}, b^{(i)}) = g^n(a^{(j)}, b^{(j)})\eqqcolon z^{(j)}$. Hence, by \ref{lem:3} the lists $\Unfold(a^{(i)}, b^{(i)})$ and $\Unfold(a^{(j)}, b^{(j)})$ have every element colliding with each other. This produces a 2--1 input.
\end{proof}

\section{Lower bound for bipartite pigeonhole}
In this section we prove \cref{thm:search}. We do it by describing a reduction from the decision problem $\BiCol_N$ to the search problem $\BiPHP^{2N}_N$.

\begin{theorem}
\label{thm:reduction2search}
If there is a randomised protocol for $\BiPHP^{2N}_N$ of communication cost $d$, then there is a randomised protocol for $\BiCol_N$ of cost $O(d)$.
\end{theorem}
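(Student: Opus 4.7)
My plan is to solve $\BiCol_N$ by invoking the given $\BiPHP^{2N}_N$ protocol $P$ a constant number of times, using shared randomness to embed the $\BiCol$ instance inside a pigeonhole instance. Given the $\BiCol$ input $z = x \centerdot y \in [N]^N$, I extend it to a length-$2N$ sequence $w$ by appending the identity: $w_i \coloneqq z_i$ for $i \in [N]$ and $w_{N+i} \coloneqq i$ for $i \in [N]$. Using shared randomness, Alice and Bob sample a uniformly random permutation $\sigma \in \Pi_{2N}$ and run $P$ on the shuffled input $w' \coloneqq w \circ \sigma^{-1}$; because the appended half of $w$ is deterministic, each party can compute their own half of $w'$ from their $\BiCol$-input together with $\sigma$ alone. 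When $P$ returns an alleged collision $(i,j)$, the parties declare ``$2$-to-$1$'' if both preimages $\sigma^{-1}(i), \sigma^{-1}(j)$ lie in $[N]$. In the $1$-to-$1$ case $z$ is a permutation of $[N]$, so every value in $w$ appears exactly twice and every genuine collision in $w$ is cross; hence a correct $P$-output cannot trigger the ``both in $[N]$'' condition, and the false-flag probability per trial is at most the error probability of $P$.

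In the $2$-to-$1$ case, writing $T \coloneqq z([N])$ for the image of size $N/2$, every $v \in T$ satisfies $|w^{-1}(v)| = 3$ (two first-half positions plus one second-half position), so the collisions of $w$ partition into $\binom{3}{2} = 3$ unordered pairs per $v \in T$, exactly one of which is within the first half. The main obstacle is ruling out that an adversarial $P$ systematically steers its output toward cross pairs; I would handle this with a symmetry argument based on the stabilizer $G \le \Pi_{2N}$ of $w$ (permutations $\rho$ satisfying $w \circ \rho^{-1} = w$). Since the nontrivial fibers of $w$ all have size $3$, $G \cong S_3^{N/2}$ acts transitively on each such fiber. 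For any $\rho \in G$ the permutations $\sigma$ and $\sigma\rho$ are equally likely under the uniform distribution on $\Pi_{2N}$ and produce the \emph{same} input $w'$ to $P$---so $P$'s output $(i,j)$ coincides---but the translated pair $(\sigma^{-1}(i), \sigma^{-1}(j))$ differs by $\rho^{-1}$. Averaging over $\rho \in G$ at fixed $w'$, the translated pair is uniform over the $G$-orbit of the original collision, i.e.\ over all $6$ ordered pairs of distinct positions in the fiber $w^{-1}(v)$ associated with the output value $v$. Exactly $2$ of these $6$ pairs lie in $[N]\times[N]$, so conditional on $P$ being correct the ``both in $[N]$'' event has probability $1/3$.

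To close, I would first amplify $P$ to a sufficiently small constant error $\epsilon$ by running it $O(1)$ times and keeping any output that verifies as a genuine collision (verification adds only $O(1)$ bits on top of $P$, since both parties already know the proposed $(i,j)$ and need only exchange one bit each indicating $x'_i = x'_j$ and $y'_i = y'_j$), yielding an amplified protocol of cost $O(d)$. Each independent trial of the reduction then flags ``$2$-to-$1$'' with probability $\le \epsilon$ in the $1$-to-$1$ case and $\ge (1-\epsilon)/3$ in the $2$-to-$1$ case. Iterating the reduction $k = O(1)$ times with fresh shared randomness and declaring ``$2$-to-$1$'' if any trial flags drives both error probabilities below any desired constant, giving a $\BiCol_N$ protocol of total cost $O(kd) = O(d)$.
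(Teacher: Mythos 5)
Your proposal is correct and follows essentially the same route as the paper: append one copy of every value in $[N]$, apply a shared uniformly random permutation, run the $\BiPHP^{2N}_N$ protocol, and declare ``2-to-1'' exactly when the returned collision avoids the planted positions, repeating $O(1)$ times. Your stabilizer/coset symmetry argument is a more formal justification of the paper's statement that, conditioned on the shuffled input, each of the three positions in a colliding fiber is equally likely to be planted, and your explicit error-amplification handles the bounded-error case that the paper only mentions in passing.
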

\begin{proof}
The proof idea is to start with an input to $\BiCol_N$ and then append it with more numbers to construct an input to $\BiPHP^{2N}_N$. Adding more numbers will create some new collisions in the input list, but our reduction will remember which collisions where ``planted'' during the reduction. We then randomly shuffle the input list so as to make the planted collisions indistinguishable from collisions (if any) coming from the original input to $\BiCol_N$. We now explain this in more detail.

Let $(x,y)$ be an input to $\BiCol_N$. That is, Alice holds $x=(x_1,\ldots,x_N)\in(\{0,1\}^{n/2})^N$ and Bob holds $y=(y_1,\ldots,y_N)\in(\{0,1\}^{n/2})^N$. In the reduction, we first append Alice's input by the \emph{planted} half-numbers $(a_1,\ldots,a_N)\in(\{0,1\}^{n/2})^N$ and Bob's input by the \emph{planted} half-numbers $(b_1,\ldots,b_N)\in(\{0,1\}^{n/2})^N$ where the concatenated strings $a_ib_i$, $i\in[N]$, range lexicographically over all binary numbers in $\{0, 1\}^n$.

Next, Alice and Bob use public randomness to sample a permutation $\pi\colon[2N]\to[2N]$ uniformly at random, which they then use to permute their lists of length $2N$. While doing so, they remember which positions in the permuted list occupy planted numbers (namely, those in positions $\pi(\{N+1,\ldots,2N\})$). Call the resulting list~$(x',y')$. We now let Alice and Bob run the hypothesised protocol $\calP$ for $\BiPHP^{2N}_N$ on input $(x',y')$ to find some collision $x'_iy'_i=x'_jy'_j$ where $i\neq j$. (We assume for simplicity that $\calP$ finds a collision with probability $1$. The following analysis can be adapted even when $\calP$ errs with bounded probability.)

We have two cases depending on whether $(x,y)$ was 1--1 or 2--1 (see~\cref{fig:inputs}):
\begin{itemize}
    \item If $(x,y)$ was 1--1 then $(x',y')$ is 2--1. Moreover, each collision in $(x',y')$ involves a planted number. In particular, the collision $\{i,j\}$ found by the protocol always features at least one planted number.
    \item If $(x,y)$ was 2--1 then $(x',y')$ is an input where $N/2$ many numbers appear thrice, and $N/2$ numbers appear once. We claim that the collision $\{i,j\}$ found by $\calP$ will not feature a planted number with probability at least $1/3$ (over the random choice of $\pi$). Indeed, let~$k\notin\{i,j\}$ be the third position such that $x'_iy'_i=x'_jy'_j=x'_ky'_k$. Then conditioned on $\pi$ having produced the input $(x',y')$, each position in $\{i,j,k\}$ is equally likely to occupy a planted number. Thus, with probability $1/3$, the planted number lies in position $k$ and not in $\{i,j\}$.
\end{itemize}

Our protocol for $\BiCol_N$ guesses that $(x,y)$ is 2--1 if the collision $\{i,j\}$ returned by $\calP$ does not involve a planted number. We can further reduce the error probability down to~$(2/3)^t$ by repeating the randomised reduction and $\calP$ some $t=O(1)$ times and seeing if any one of these runs finds a collision without a planted number.
\end{proof}

\begin{figure}[t]
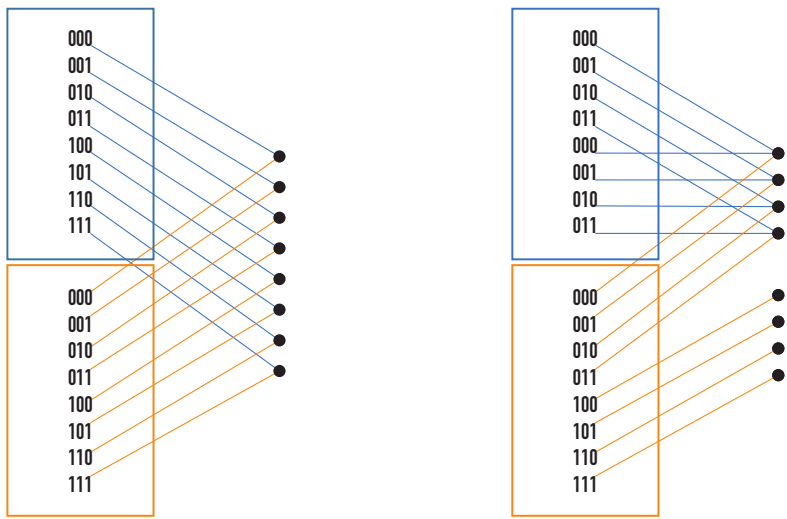

\centering
\begin{lpic}[t(-6mm)]{figs/inputs(0.4)}
\end{lpic}
\caption{Illustration of collisions in 1--1 and 2--1 inputs. The original input $(x,y)$ is drawn at the top, and the planted numbers $(a,b)$ are drawn at the bottom.}
\label{fig:inputs}
\vspace{-1mm}
\end{figure}

\subsection*{Acknowledgements}
We thank anonymous RANDOM reviewers for their helpful comments.

\DeclareUrlCommand{\Doi}{\urlstyle{sf}}
\renewcommand{\path}[1]{\small\Doi{#1}}
\renewcommand{\url}[1]{\href{#1}{\small\Doi{#1}}}
\bibliographystyle{alphaurl}
\bibliography{collision}

\newcommand{\etalchar}[1]{$^{#1}$}
\begin{thebibliography}{AKKT20}

\bibitem[Aar02]{Aaronson2002}
Scott Aaronson.
\newblock Quantum lower bound for the collision problem.
\newblock In {\em Proceedings of the 34th Symposium on Theory of Computing
  (STOC)}, pages 635--642. {ACM}, 2002.
\newblock \href {https://doi.org/10.1145/509907.509999}
  {\path{doi:10.1145/509907.509999}}.

\bibitem[Aar12]{Aaronson2012}
Scott Aaronson.
\newblock Impossibility of succinct quantum proofs for collision-freeness.
\newblock {\em Quantum Information and Computation}, 12(1-2):21--28, 2012.
\newblock \href {https://doi.org/10.26421/QIC12.1-2-3}
  {\path{doi:10.26421/QIC12.1-2-3}}.

\bibitem[Aar13]{Scott2013}
Scott Aaronson.
\newblock The collision lower bound after 12 years, 2013.
\newblock QStart talk.
\newblock URL: \url{https://scottaaronson.blog/?p=1458}.

\bibitem[ABK21]{Anshu2021}
Anurag Anshu, Shalev Ben{-}David, and Srijita Kundu.
\newblock On query-to-communication lifting for adversary bounds.
\newblock In {\em Proceedings of the 36th Computational Complexity Conference
  (CCC)}, volume 200, pages 30:1--30:39. Schloss Dagstuhl, 2021.
\newblock \href {https://doi.org/10.4230/LIPICS.CCC.2021.30}
  {\path{doi:10.4230/LIPICS.CCC.2021.30}}.

\bibitem[AKKT20]{AaronsonKKT20}
Scott Aaronson, Robin Kothari, William Kretschmer, and Justin Thaler.
\newblock Quantum lower bounds for approximate counting via laurent
  polynomials.
\newblock In Shubhangi Saraf, editor, {\em 35th Computational Complexity
  Conference, {CCC} 2020, July 28-31, 2020, Saarbr{\"{u}}cken, Germany (Virtual
  Conference)}, volume 169 of {\em LIPIcs}, pages 7:1--7:47. Schloss Dagstuhl -
  Leibniz-Zentrum f{\"{u}}r Informatik, 2020.
\newblock \href {https://doi.org/10.4230/LIPIcs.CCC.2020.7}
  {\path{doi:10.4230/LIPIcs.CCC.2020.7}}.

\bibitem[Amb05]{Ambainis2005}
Andris Ambainis.
\newblock Polynomial degree and lower bounds in quantum complexity: collision
  and element distinctness with small range.
\newblock {\em Theory Comput.}, 1:37--46, 2005.
\newblock \href {https://doi.org/10.4086/toc.2005.v001a003}
  {\path{doi:10.4086/toc.2005.v001a003}}.

\bibitem[AS04]{Aaronson2004}
Scott Aaronson and Yaoyun Shi.
\newblock Quantum lower bounds for the collision and the element distinctness
  problems.
\newblock {\em Journal of the {ACM}}, 51(4):595--605, jul 2004.
\newblock \href {https://doi.org/10.1145/1008731.1008735}
  {\path{doi:10.1145/1008731.1008735}}.

\bibitem[BCH{\etalchar{+}}19]{Bouland2019}
Adam Bouland, Lijie Chen, Dhiraj Holden, Justin Thaler, and Prashant~Nalini
  Vasudevan.
\newblock On the power of statistical zero knowledge.
\newblock {\em {SIAM} Journal on Computing}, 49(4):FOCS17--1--FOCS17--58, 2019.
\newblock \href {https://doi.org/10.1137/17m1161749}
  {\path{doi:10.1137/17m1161749}}.

\bibitem[BHH11]{Bravyi2011}
Sergey Bravyi, Aram Harrow, and Avinatan Hassidim.
\newblock Quantum algorithms for testing properties of distributions.
\newblock {\em IEEE Transactions on Information Theory}, 57(6):3971--3981,
  2011.
\newblock \href {https://doi.org/10.1109/TIT.2011.2134250}
  {\path{doi:10.1109/TIT.2011.2134250}}.

\bibitem[BHT98]{Brassard1998}
Gilles Brassard, Peter H{\o}yer, and Alain Tapp.
\newblock Quantum cryptanalysis of hash and claw-free functions.
\newblock In {\em Proceedings of the 3rd Latin American Symposium on
  Theoretical Informatics (LATIN)}, pages 163--169. Springer, 1998.

\bibitem[BSMP91]{Blum1991}
Manuel Blum, Alfredo~De Santis, Silvio Micali, and Giuseppe Persiano.
\newblock Noninteractive zero-knowledge.
\newblock {\em {SIAM} Journal on Computing}, 20(6):1084--1118, 1991.
\newblock \href {https://doi.org/10.1137/0220068} {\path{doi:10.1137/0220068}}.

\bibitem[BT16]{BunT16}
Mark Bun and Justin Thaler.
\newblock Dual polynomials for collision and element distinctness.
\newblock {\em Theory Comput.}, 12(1):1--34, 2016.
\newblock \href {https://doi.org/10.4086/toc.2016.v012a016}
  {\path{doi:10.4086/toc.2016.v012a016}}.

\bibitem[GP18]{Goos2018}
Mika G{\"o}{\"o}s and Toniann Pitassi.
\newblock Communication lower bounds via critical block sensitivity.
\newblock {\em SIAM Journal on Computing}, 47(5):1778--1806, 2018.
\newblock \href {https://doi.org/10.1137/16M1082007}
  {\path{doi:10.1137/16M1082007}}.

\bibitem[GR04]{GroverR04}
Lov~K. Grover and Terry Rudolph.
\newblock How significant are the known collision and element distinctness
  quantum algorithms?
\newblock {\em Quantum Inf. Comput.}, 4(3):201--206, 2004.
\newblock \href {https://doi.org/10.26421/QIC4.3-5}
  {\path{doi:10.26421/QIC4.3-5}}.

\bibitem[HN12]{Huynh2012}
Trinh Huynh and Jakob Nordstr{\"o}m.
\newblock On the virtue of succinct proofs: Amplifying communication complexity
  hardness to time--space trade-offs in proof complexity.
\newblock In {\em Proceedings of the 44th Symposium on Theory of Computing
  (STOC)}, pages 233--248. ACM, 2012.
\newblock \href {https://doi.org/10.1145/2213977.2214000}
  {\path{doi:10.1145/2213977.2214000}}.

\bibitem[HP17]{Hrubes2017}
Pavel Hrube{\v{s}} and Pavel Pudl{\'{a}}k.
\newblock Random formulas, monotone circuits, and interpolation.
\newblock In {\em Proceedings of the 58th Symposium on Foundations of Computer
  Science (FOCS)}, pages 121--131, 2017.
\newblock \href {https://doi.org/10.1109/FOCS.2017.20}
  {\path{doi:10.1109/FOCS.2017.20}}.

\bibitem[IR21]{Itsykson2021}
Dmitry Itsykson and Artur Riazanov.
\newblock Proof complexity of natural formulas via communication arguments.
\newblock In {\em Proceedings of 36th Computational Complexity Conference
  (CCC)}, volume 200, pages 3:1--3:34. Schloss Dagstuhl, 2021.
\newblock \href {https://doi.org/10.4230/LIPIcs.CCC.2021.3}
  {\path{doi:10.4230/LIPIcs.CCC.2021.3}}.

\bibitem[IS20]{Itsykson2020}
Dmitry Itsykson and Dmitry Sokolov.
\newblock Resolution over linear equations modulo two.
\newblock {\em Annals of Pure and Applied Logic}, 171(1):1--31, 2020.
\newblock \href {https://doi.org/10.1016/j.apal.2019.102722}
  {\path{doi:10.1016/j.apal.2019.102722}}.

\bibitem[KN97]{Kushilevitz1997}
Eyal Kushilevitz and Noam Nisan.
\newblock {\em Communication Complexity}.
\newblock Cambridge University Press, 1997.
\newblock \href {https://doi.org/10.1017/CBO9780511574948}
  {\path{doi:10.1017/CBO9780511574948}}.

\bibitem[Kut05]{Kutin2005}
Samuel Kutin.
\newblock Quantum lower bound for the collision problem with small range.
\newblock {\em Theory of Computing}, 1(2):29--36, 2005.
\newblock \href {https://doi.org/10.4086/toc.2005.v001a002}
  {\path{doi:10.4086/toc.2005.v001a002}}.

\bibitem[LZ17]{Lovett2017}
Shachar Lovett and Jiapeng Zhang.
\newblock On the impossibility of entropy reversal, and its application to
  zero-knowledge proofs.
\newblock In {\em Proceedings of the 15th Theory of Cryptography Conference
  (TCC)}, pages 31--55. Springer, 2017.
\newblock \href {https://doi.org/10.1007/978-3-319-70500-2_2}
  {\path{doi:10.1007/978-3-319-70500-2_2}}.

\bibitem[Mid04]{Midrijanis2004}
Gatis Midrij{\={a}}nis.
\newblock A polynomial quantum query lower bound for the set equality problem.
\newblock In {\em Proceedings of the 31st International Conference on Automata,
  Languages and Programming (ICALP)}, volume 3142, pages 996--1005. Springer,
  2004.
\newblock \href {https://doi.org/10.1007/978-3-540-27836-8_83}
  {\path{doi:10.1007/978-3-540-27836-8_83}}.

\bibitem[MSS07]{Magniez2007}
Fr{\'{e}}d{\'{e}}ric Magniez, Miklos Santha, and Mario Szegedy.
\newblock Quantum algorithms for the triangle problem.
\newblock {\em {SIAM} Journal on Computing}, 37(2):413--424, January 2007.
\newblock \href {https://doi.org/10.1137/050643684}
  {\path{doi:10.1137/050643684}}.

\bibitem[RW92]{Raz1992}
Ran Raz and Avi Wigderson.
\newblock Monotone circuits for matching require linear depth.
\newblock {\em Journal of the ACM}, 39(3):736–744, jul 1992.
\newblock \href {https://doi.org/10.1145/146637.146684}
  {\path{doi:10.1145/146637.146684}}.

\bibitem[RY20]{Rao2020}
Anup Rao and Amir Yehudayoff.
\newblock {\em Communication Complexity: And Applications}.
\newblock Cambridge University Press, 2020.
\newblock \href {https://doi.org/10.1017/9781108671644}
  {\path{doi:10.1017/9781108671644}}.

\bibitem[She11]{Sherstov2011}
Alexander Sherstov.
\newblock The pattern matrix method.
\newblock {\em {SIAM} Journal on Computing}, 40(6):1969--2000, 2011.
\newblock \href {https://doi.org/10.1137/080733644}
  {\path{doi:10.1137/080733644}}.

\end{thebibliography}

\end{document}